\documentclass[preprint,pdflatex,sn-nature,oneside]{sn-jnl}

\usepackage[utf8]{inputenc}
\usepackage[english]{babel}
\usepackage[T1]{fontenc}

\usepackage{amsmath,amssymb}
\usepackage{mathtools}

\theoremstyle{thmstyleone}%
\newtheorem{theorem}{Theorem}
\theoremstyle{thmstyletwo}%
\newtheorem{remark}{Remark}%

\theoremstyle{thmstylethree}%
\newtheorem{definition}{Definition}%
\newtheorem{lemma}{Lemma}%

\makeatletter\@twosidefalse\@mparswitchfalse\makeatother

\usepackage{microtype} 

\usepackage{blindtext}

\usepackage{bm} 
\usepackage{graphicx}
\usepackage{xurl} 
\usepackage{stmaryrd}
\usepackage[table]{xcolor} 
\usepackage{caption}

\usepackage{booktabs,multirow,tabularx} 
\newcolumntype{C}{>{\centering\arraybackslash}X} 

\usepackage{enumitem} 
\usepackage{algorithm}
\usepackage{algorithmicx}
\usepackage{algpseudocode}

\usepackage{hyperref} 
\hypersetup{
  colorlinks=true,
  linkcolor=blue,
  filecolor=blue,
  citecolor=blue,      
  urlcolor=cyan,
}

\usepackage{todonotes}

\usepackage{etoolbox}

\usepackage{bibunits}
\defaultbibliography{cryptobib/abbrev3.bib,references,cryptobib/crypto.bib}
\defaultbibliographystyle{sn-nature}

\newtoggle{anonymous}
\newtoggle{comments}
\newtoggle{short}
\newtoggle{submission}
\togglefalse{anonymous}
\togglefalse{comments}
\togglefalse{short}
\togglefalse{submission}

\iftoggle{comments}{
  \newcommand{\daniel}[2][]{\todo[color=yellow!50,#1]{\textsf{Daniel:} #2}}
  \newcommand{\alex}[2][]{\todo[color=green!50,#1]{\textsf{Alex:} #2}}  
}{
  \newcommand{\daniel}[2][]{}
  \newcommand{\alex}[2][]{}
}

\iftoggle{submission}{
  \usepackage{lineno}
  \linenumbers}{
}

\iftoggle{anonymous}{
}{

\iftoggle{submission}{
\author[1]{\fnm{Alexander} \sur{Bienstock}*}
\author[1]{\fnm{Daniel} \sur{Escudero}*}
\author[1]{\fnm{Antigoni} \sur{Polychroniadou}*}

}{
\author[1]{\fnm{Alexander} \sur{Bienstock}}
\author[1]{\fnm{Daniel} \sur{Escudero}\footnote{Work done while at JPMorganChase. D.E.\@ is now at TACEO.}{\color{white},}}
\author[1]{\fnm{Antigoni} \sur{Polychroniadou}}
}

\author[1]{\fnm{Zhen} \sur{Zeng}}

\author[1]{\fnm{Pranav} \sur{Bhat}}

\author[1]{\fnm{Ashok} \sur{Singal}}

\author[1]{\fnm{Prashant} \sur{Sharma}}

\author[1]{\fnm{Manuela} \sur{Veloso}}

\affil[1]{\orgname{JPMorganChase}, \orgaddress{\city{New York}, \state{NY}, \country{USA}}}
}

\usepackage{stmaryrd}

\newcommand{\Func}[1]{\mathcal{F}_{\mathsf{#1}}}
\newcommand{\Prot}[1]{\Pi_{\mathsf{#1}}}

\mathchardef\mhyphen="2D 
\newcommand{\getsr}{\gets_{\$}}

\newcommand{\ID}{\mathsf{ID}}
\newcommand{\dist}{\mathsf{dist}}
\newcommand{\DB}{\mathsf{DB}}
\newcommand{\BioDist}{\mathcal{V}}

\newcommand{\LSH}{\mathsf{LSH}}

\newcommand{\Hash}{\mathsf{CH}}
\newcommand{\threshold}{\tau}

\newcommand{\SBA}{\mathsf{SBA}}
\newcommand{\Setup}{\mathsf{Bio\mhyphen Setup}}
\newcommand{\Enr}{\mathsf{Bio\mhyphen Enroll}}
\newcommand{\Auth}{\mathsf{Bio\mhyphen Auth}}
\newcommand{\pp}{\mathsf{pp}}
\newcommand{\pub}{\mathsf{pub}}

\newcommand{\HILL}{H_{\mathsf{HILL}}}
\newcommand{\aveminent}{\tilde{H}_{\infty}}

\newcommand{\Att}{\mathcal{A}}
\newcommand{\Sim}{\mathcal{S}}
\newcommand{\poly}{\mathsf{poly}}
\newcommand{\secparam}{\lambda}
\newcommand{\negl}{\mathsf{negl}(\secparam)}
\newcommand{\RO}{\mathsf{RO}}
\newcommand{\Exp}{\mathbb{E}}
\usepackage{tcolorbox}
\tcbuselibrary{skins,breakable}

\makeatletter
\newcommand{\DrawLine}{%
  \begin{tikzpicture}
  \path[use as bounding box] (0,0) -- (\linewidth,0);
  \draw[color=black!75,dashed,dash phase=2pt]
        (0-\kvtcb@leftlower-\kvtcb@boxsep,0)--
        (\linewidth+\kvtcb@rightlower+\kvtcb@boxsep,0);
  \end{tikzpicture}%
  }
\makeatother

\newtcolorbox{mybox}[2][]{%
  enhanced,
  title        = {#2},
  attach boxed title to top left={xshift=+3mm,yshift*=-3mm},
  breakable    = true,
  colback      = black!4,
  colframe     = black!75,
  fonttitle    = \bfseries,
  colbacktitle = black!10!white,
  coltitle     = black,
  #1
}

\newtcolorbox[auto counter,list inside=func]{functionality}[2][]{%
  enhanced,
  title        = {Functionality~\thetcbcounter: #2},
  attach boxed title to top left={xshift=+3mm,yshift*=-3mm},
  breakable    = true,
  colback      = yellow!4,
  colframe     = black!75,
  fonttitle    = \bfseries,
  fontupper    = \small,
  fontlower    = \small,
  colbacktitle = yellow!10!white,
  coltitle     = black,
  #1
}

\newtcolorbox[use counter=pro,list inside=prot]{protocol}[2][]{%
  enhanced,
  title        = {Protocol~\thetcbcounter: #2},
  attach boxed title to top left={xshift=+3mm,yshift*=-3mm},
  breakable    = true,
  colback      = black!4,
  colframe     = black!75,
  fonttitle    = \bfseries, 
  fontupper    = \small,
  fontlower    = \small,
  colbacktitle = black!10!white,
  coltitle     = black,
  #1
}

\newtcolorbox[use counter=pro,list inside=proc]{procedure}[2][]{%
  enhanced,
  title        = {Procedure~\thetcbcounter: #2},
  attach boxed title to top left={xshift=+3mm,yshift*=-3mm},
  breakable    = true,
  colback      = cyan!2,
  colframe     = black!75,
  fonttitle    = \bfseries, 
  fontupper    = \small,
  fontlower    = \small,
  colbacktitle = cyan!5!white,
  coltitle     = black,
  #1
}

\newtcbox{\xmybox}[1][red]{on line,
arc=7pt,colback=#1!10!white,colframe=#1!50!black,
before upper={\rule[-3pt]{0pt}{10pt}},boxrule=1pt,
boxsep=0pt,left=6pt,right=6pt,top=2pt,bottom=2pt}

\begin{document}
\setcitestyle{super,open={},close={}}

\title[Scalable Secure Biometric Authentication without Auxiliary Identifiers]{Scalable Secure Biometric Authentication without Auxiliary Identifiers}

\abstract{
  The prevalence of biometric authentication has been on the rise due to its ease of use and elimination of weak passwords.
  To date, most biometric authentication systems have been designed for on-device authentication of the device owner (e.g., smartphones and laptops).
  Recently, biometric authentication systems have started to emerge that are designed to authenticate users against cloud databases storing representations of biometrics for large numbers of users (potentially millions), such as those facilitating biometric payments.
  However, the use of a large cloud database introduces a significant attack vector, as a breach of the database could lead to the compromise of all enrolled users' sensitive biometric data.
  Indeed, all such existing systems either do not adequately protect against such a breach, or are impractical to deploy and use due to their high computational overhead.
  In this work, we present a new biometric authentication system that provides provable security guarantees against data breaches, while remaining scalable and performant.
  To do so, we marry artificial intelligence with advanced cryptographic techniques in a novel fashion, providing several optimizations along the way.
  Our work is the first to show that real-world scalable privacy-preserving biometric authentication without auxiliary identifiers is feasible, and we believe that it will spur widespread industrial adoption and further research in this area.
}
\maketitle

\iftoggle{submission}{\begin{bibunit}}{}
\section{Introduction}

Biometrics are becoming an increasingly popular method for user authentication, providing the means to unlock smartphones~\cite{abuhamad2020sensor}, access secure facilities~\cite{poh2009benchmarking}, board planes~\cite{khan2021use}, and more.
In such systems, a user is first enrolled by providing a biometric, and thereafter that same biometric (and sometimes only that biometric, without any auxiliary information) is used to authenticate the user.
Some examples of biometric modalities include face images, iris scans, fingerprints, palmprints, and more.

Biometric authentication will soon see expanded use-cases, including those where the biometric data of potentially hundreds of thousands (or millions) of users must be stored in a cloud database, rather than locally on some device.
Such use-cases include biometric-based payment systems~\cite{patra2022advancing}, which is being pushed as the payment method of the future by Amazon~\cite{amazonone} and JPMorgan~\cite{jpmorgan2023biometric}; fraud prevention via ensuring biometric uniqueness across accounts in a system (e.g., banking accounts); biometric access to cloud services~\cite{talreja2018biometrics}; and more.
However, the storage of large amounts of biometric data on a cloud server in these settings raises significant security and privacy concerns.
Indeed, even some of the world's most mature and trusted software companies are susceptible to data breaches~\cite{moore2017harms}, and biometric data is particularly sensitive since it is immutable and uniquely tied to an individual.
To give an illustrative example, if a person's credit card information is stolen, they can cancel the card and get a new one; if their biometric data is stolen, they cannot change something about their biology.

Once biometric data is compromised by an adversary, this adversary may try to recreate a physical copy of the biometric (for example, a fake fingerprint) in order to impersonate the victim.
However, this complex physical effort may not be needed; the adversary may be able to bypass the physical biometric scanner entirely and instead infiltrate the system past the point of physical scanning, using the stolen biometric at that point.
Moreover, since biometric data is soon to be used for financial transactions and other important services, the stakes are extremely high.
Beyond general privacy concerns, identity theft, fraud, surveillance, and other malicious activities are all potential risks if biometric data is not properly protected.
Furthermore, for companies offering biometric authentication services commercially, a data breach could lead to significant legal liability, reputational damage, and loss of customer trust~\cite{ponemon2024cost, ftc2023biometric}.
Therefore, ensuring the security of biometric authentication systems is of paramount importance.

A pipeline for processing biometric data typically involves converting the biometric into a so-called \emph{template}, which is a mathematical representation of the biometric that is suitable for storage and comparison.
While some organizations claim that simply storing templates in the cloud database is secure enough, there are numerous works demonstrating that templates can be inverted to recover the original biometric data~\cite{gomezbarrero2020inversebio, shahreza2023template, mai2018reconstruction} (see right side of Figure~\ref{fig:system}), and thus more work is needed.
Indeed, experts warn against the breach of biometric data from such systems, including that of Amazon~\cite{cnbc2023amazon}.

\begin{figure}[t!]
    \centering
    \includegraphics[width=\linewidth]{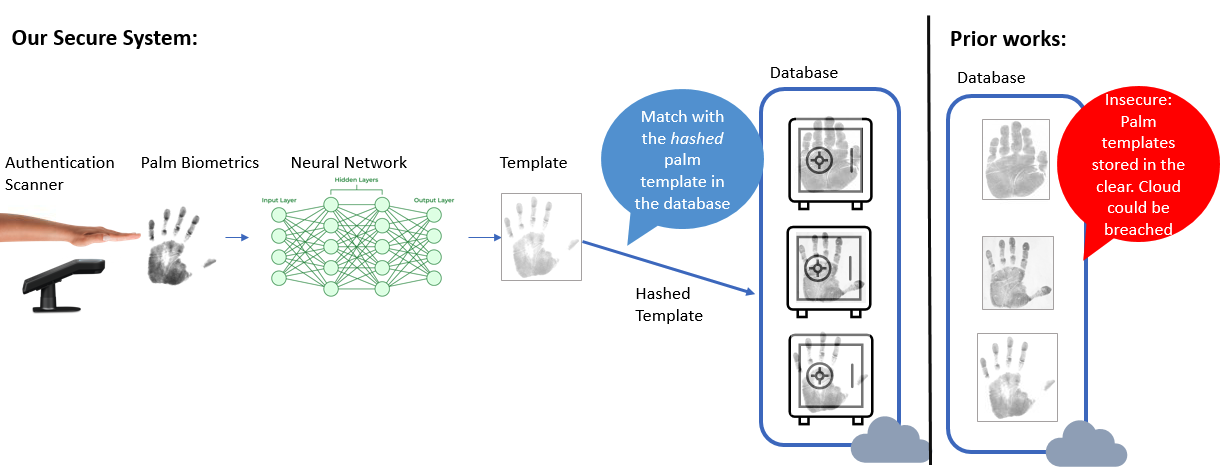}
    \caption{System architecture for our secure biometric authentication system. Templates computed from a neural network based on biometric scans are \emph{cryptographically hashed} before being stored in the database, which hides all information about the biometric data.
    These stored hashes are matched against for authentication.
    Previous solutions stored templates in the clear in the database, which can be inverted to recover original biometric data in the case of a data breach.}
    \label{fig:system}
\end{figure}

There are some prior works that provide \emph{provably secure} biometric authentication techniques.
First, some works use advanced cryptographic primitives like \emph{Fully-Homomorphic Encryption} and \emph{Secure Multiparty Computation}, which allow to perform the authentication process while the biometric data remains encrypted~\cite{boddeti2018secure, engelsma2022hers, choi2024blindmatch, CCS:BCHIKM25,EPRINT:BKSW24}.
However, these techniques all require some secret information to recover the result of the authentication, and this secret information must be stored somewhere.
The server storing the secret information can just as easily be compromised, leading to the same issues as before.
Additionally, these solutions are often impractical due to their high computational overhead~\cite{yang2023review}.

Another line of work uses \emph{fuzzy extractors}~\cite{dodFuzExt, JC:CFPRS21, CCS:SDFAMR25, ASIACCS:UYCKL21} to derive a strong cryptographic key from a biometric template, which is then used for authentication.
Although there is no need for a secret key to be stored on the server, fuzzy extractors are only used for \emph{one-to-one} verification of a user, in which the identity of the user is already known, explicitly through auxiliary user identifiers or implicitly through device ownership, and the purpose of the system is to verify the users's claimed identity.
A richer functionality that is required in many applications is \emph{one-to-many} identification, in which the identity of the user is not a priori known, and instead needs to be successfully determined from a large database of enrolled users, without any auxiliary identifiers.
Fuzzy extractors do not natively support this richer functionality.
Indeed, if a user \emph{only} presents a biometric, and not any other identifier, then a fuzzy extractor-based system must search through the entire large database of enrolled users one-by-one to find a match, which is computationally infeasible.

In our work, we present a new biometric authentication system that addresses the above challenges. 
See Figure~\ref{fig:system} for an overview of our system architecture.
Our system enables for the first time accurate, secure biometric authentication using a cloud database, without the need to store sensitive information on the server.
It supports real-time one-to-many identification with authentication time that scales independently of the number of enrolled users.
It additionally provides for the first time an important property called \emph{revocability}, which ensures that users can revoke their enrollment in the database in a way that guarantees their previously-enrolled biometric information cannot again be used by anyone, even if there was/is a prior/future database leak.
Our solution works for all biometric modalities, however we focus on face images in our experiments due to the availability of large-scale public datasets (and lack thereof for other modalities).
We leverage advanced cryptographic techniques to ensure the privacy and security of biometric data throughout the lifetime of the system, while also ensuring practicality for real-world use cases.
The properties achieved by our system in comparison to previous practical systems are listed in Table~\ref{tab:comparison}.

\begin{table}[h]
    \centering
    \begin{tabular}{l>{\centering\arraybackslash}p{0.22\linewidth}>{\centering\arraybackslash}p{0.22\linewidth}>{\centering\arraybackslash}p{0.18\linewidth}}
        \toprule
        & Encryption in Storage/Computation & Authentication Time & Revocability \\
        \midrule
        Existing Practical Systems & No & $O(N)$ & No \\
        Our Secure System & \textbf{Yes} & $\mathbf{O(1)}$ & \textbf{Yes} \\
        \bottomrule
    \end{tabular}
    \caption{Comparison of our secure biometric authentication system with existing practical systems. $N$ is the number of enrolled users.
    Our system encrypts biometric data at all times, has authentication time independent of number of enrollees, and provides revocability (defined above), while existing practical systems do not have any of these properties.}
    \label{tab:comparison}
\end{table}
\section{System Architecture and Threat Model}

We now overview the architecture of our system (depicted in Figure~\ref{fig:system}), along with the threat model.
A more formal model is presented in Appendix~\ref{sec:security}.
Our system consists of two types of devices: scanners that capture biometric data from users, and a remote database that receives information extracted from the biometric from the scanner, stores it and determines if authentication was successful.

During the enrollment phase, a user registers their biometric data in the system, in order to authenticate at some later point, using the same biometric.
The system stores some derivation of their biometric data associated with their identity in the database, alongside the (processed) data of all other users of the system.
Then, during the authentication phase, a user attempts to authenticate with the system using their biometric data.
The system checks whether the biometric data corresponds to a user in the database, and if so, returns the user's identity.

From a functionality standpoint, we naturally only want enrolled users to be able to successfully authenticate as themselves.
For security, we assume that the scanners are trusted devices that securely capture biometric data and send extracted information to the database over an encrypted channel.
In practice, the scanners immediately delete any data derived from the biometrics and would be fortified with anti-tampering measures~\cite{iso30107, marcel2023handbook}, liveness detection~\cite{marcel2023handbook}, and other security features to prevent adversarial manipulation.
On the other hand, the remote database may be compromised, and our goal is to ensure that even if an adversary gains access to the database, they cannot learn any sensitive information about enrolled users' biometrics.

\paragraph{Entropy Assumption.}
A primary assumption underlying secure biometric authentication systems is that biometric data is sufficiently random; i.e., has \emph{high entropy}.
Indeed, if biometric data was not very random, then an adversary could easily learn the distribution of biometric data, and simply sample likely biometric values until they find one that successfully authenticates as some enrolled user.
As explained in the Introduction, the adversary need not physically recreate the biometric, but could instead bypass the physical biometric scanner and infiltrate the system past that point, using the sampled biometric.
Fortunately, prior works have provided empirical evidence that biometric data does indeed have high entropy~\cite{CCS:SDFAMR25,Daugman}, and in Section~\ref{sec:exp} we provide further empirical validation of this assumption for our specific system instantiation.
In our system, we will in fact rely on this assumption of intrinsic randomness of biometrics to actually protect information about the biometric itself (see the next section and Appendix~\ref{sec:security}).\footnote{Indeed, we can tune the amount of entropy captured by our system, and thus the security of our system.}
\section{Protocol Overview}
\alex{More formal in appendix. No previous paper I saw had formal lists/pseudocode.}

\begin{figure}[t!]
    \centering
    \includegraphics[width=\linewidth]{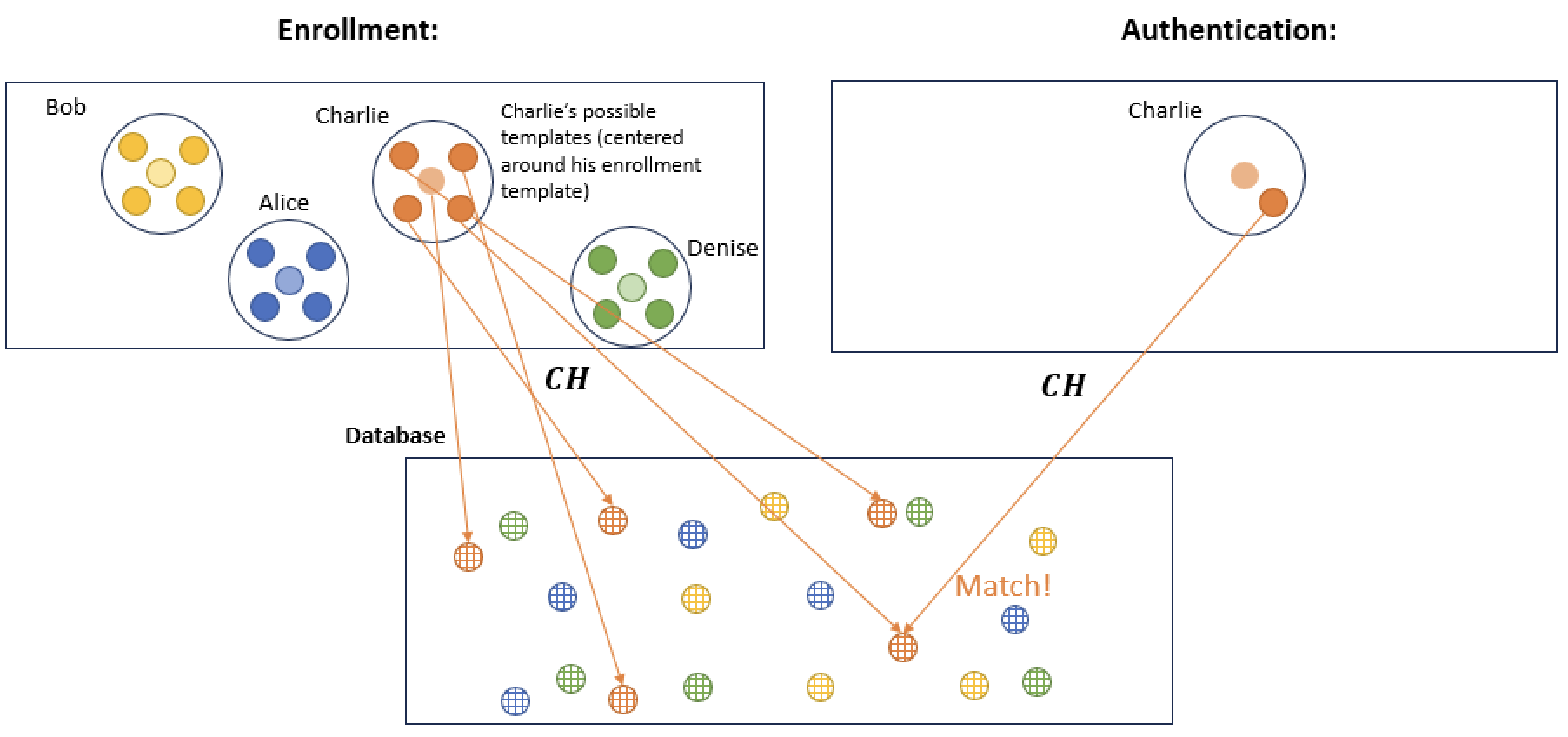}
    \caption{Pictorial, simplified overview of our secure biometric authentication protocol.
    During enrollment, the system cryptographically hashes all possible templates around the user's enrollment template (centered and faded) using $\Hash$ and stores these hash values in the database, paired with the user's identity (represented by color).
    During authentication, the system hashes the authentication template and checks if the resulting hash value matches any of the stored hash values for any user.
    Note that all hashed values stored in the database are randomly scattered throughout the space, depicting the security of the system.}
    \label{fig:alg_circles}
\end{figure}

Here, we provide a general overview of our system.
We start with the baseline one-to-one sample-then-lock fuzzy extractor of Canetti \emph{et al.}~\cite{JC:CFPRS21} (improved upon by Shukla \emph{et al.}~\cite{CCS:SDFAMR25}) and with novel techniques, extend it to the one-to-many authentication setting we aim to achieve in this work.
In Appendix~\ref{sec:formal_prot}, we provide a formal specification of our protocol and in Appendix~\ref{sec:security_proof} we provide a security proof for our protocol.

Our system critically relies on a base templatization model that extracts a high-dimensional, real-valued \emph{template} (or embedding) from a given biometric scan. This model ensures that templates extracted from biometrics of the same person are \emph{close}, while templates extracted from different people are \emph{far}, both in terms of \emph{euclidean distance} (which is closely related to \emph{cosine similarity}).

In addition, we require a \emph{cryptographic hash function}, $\Hash$, approved by the National Institutes of Standards and Technology, such as SHA-3~\cite{SHA3}, that maps bit strings to bit strings. The main property of a cryptographic hash function that we utilize is that if the input is sufficiently random (i.e., has high entropy), then the output will look random, and it will be computationally infeasible to recover any information about the input from the output. Cryptographic hash functions are \emph{deterministic}, meaning that the same input will always yield the same output.

With these two building blocks, we can provide the simplified intuition of our system, depicted pictorially in Figure~\ref{fig:alg_circles}.
The templatization model gives the property that all possible authentication templates for a given user will be closely centered around their enrollment template, and far away from the possible templates of all other users.
Thus, we can cryptographically hash all possible templates around each users' enrollment template, and store these hash values, paired with the user's identity, in the database.\footnote{Our final system performs this step in a more nuanced, efficient way.}
Then, during authentication, we can hash the authentication template and check if the resulting hash value matches any of the stored hash values for any user.
Correctness is achieved, since from above, this authentication hash value will match \emph{only} one of the hashes stored in the database that we computed during enrollment for that user, and not any other user.
Moreover, we achieve security assuming that biometric data has high entropy, since we only store outputs of the cryptographic hash function in the database.

Below, we introduce the system in more detail.
To bridge the gap between the real-valued templates and the discrete inputs of cryptographic hash functions, we also utilize a \emph{locality-sensitive hash} (LSH).
An LSH maps templates into bit strings, such that the output bit strings of two close templates are close in hamming distance (i.e., the number of bits where the strings differ), and two far templates result in bit strings that are far in hamming distance.

\paragraph{Global Setup.}
The system starts with some global setup.
We denote the output length of the LSH as $n$.
Additionally, let $0<k<n$ and $m>0$ be some other system parameters.
The system will first generate $m$ length-$k$ subsets of $[n]\coloneqq \{1,2,\dots,n\}$, denoted $\mu_1,\mu_2,\dots,\mu_m$.
An example for how to generate such subsets is just uniformly random and independently; see other methods in Section~\ref{sec:subsets}.
We will also use $t\coloneqq n-k$ to denote the number of elements of $[n]$ that we exclude from each subset $\mu_i$.

\begin{figure}[t!]
    \centering
    \includegraphics[width=\linewidth]{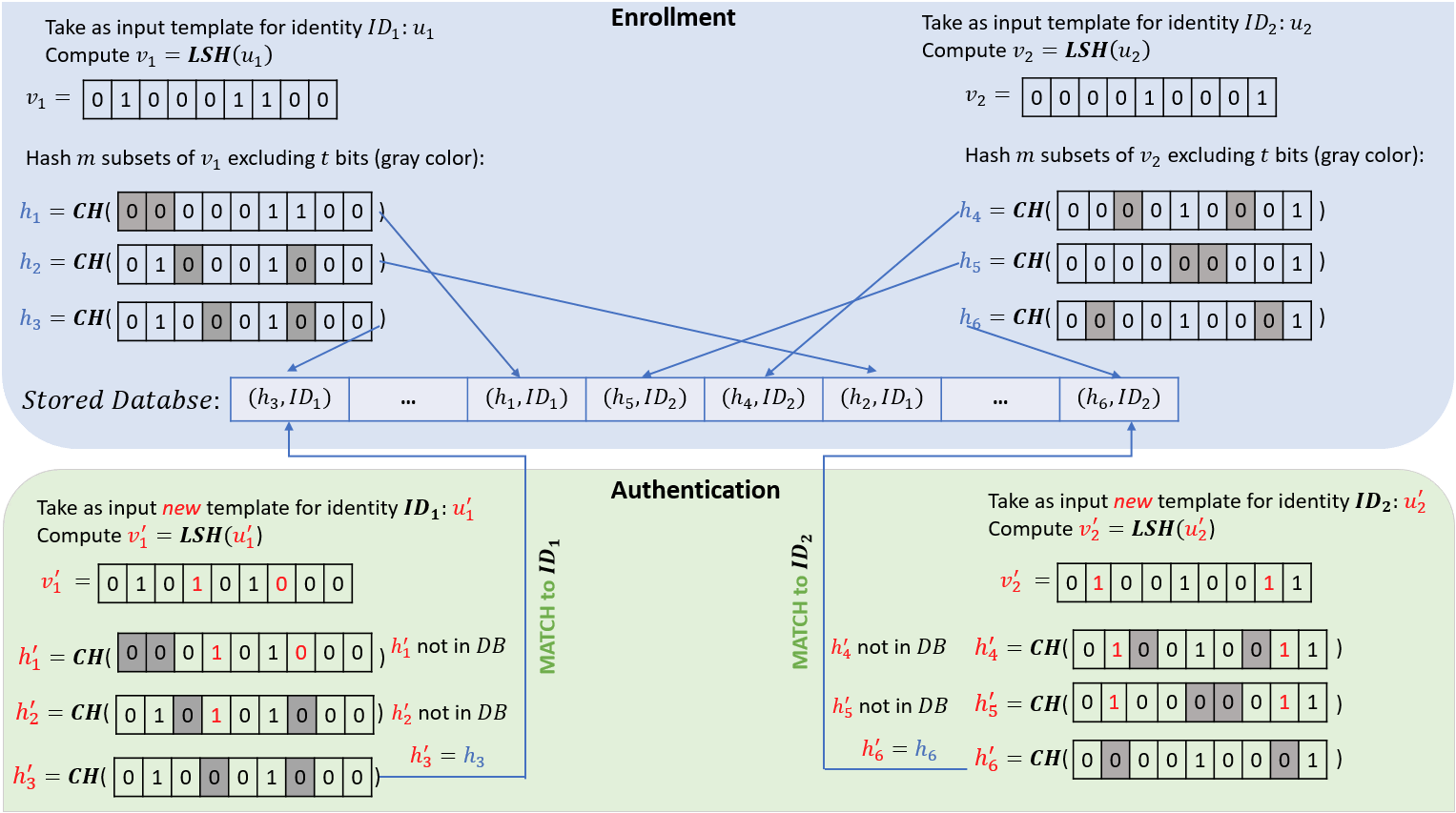}
    \caption{More detailed overview of our secure biometric authentication protocol, demonstrating the cryptographically hashed values of bit substrings that are stored in the database during enrollment (top) and against which authentication is performed (bottom).
    In this example, we have $m=3$, $n=9$, and $k=7$ (thus $t=2$).
    Since LSH output strings of the same user are close and LSH output strings of different users are far, the cryptographically hashed values of substrings will only match for same users, ensuring correctness.
    }
    \label{fig:alg_bits}
\end{figure}

\paragraph{Enrollment Phase.}
We now provide a more detailed description of the enrollment phase, demonstrated in the top of Figure~\ref{fig:alg_bits}.
After capturing the user's biometric data, the system runs it through the templatization model to obtain a template.
Next, this template is run through the LSH to obtain a \emph{bit string} of length-$n$, $v\coloneqq (v_1,\dots,v_n)$, where each $v_i\in\{0,1\}$.
With $v$ in hand, the system next computes all length-$k$ substrings of $v$ corresponding to $\mu_1,\mu_2,\dots,\mu_m$ generated during the global setup phase.
We denote these resulting substrings as $w_1,w_2,\dots,w_m$.
Finally, the system cryptographically hashes each $w_{i}$ using $\Hash$ to obtain a hash value $h_i\coloneqq \Hash(w_{i})$ and stores these hash values $h_1,\dots,h_m$ in a database, $\DB$, under the user's identity, $\ID$.
This database is a hash table, where the keys are the hash values and the values are the user identities; i.e. $(k,v)\coloneqq(h, \ID)$.
Such a hash table allows for constant-time lookups of keys (i.e., hash values).

\paragraph{Authentication Phase.}
Authentication follows a similar pattern to enrollment, presented in the bottom of Figure~\ref{fig:alg_bits}.
Indeed, authentication starts in the same way, by capturing the biometric, running it through the templatization model and LSH, obtaining substrings $w_1,w_2,\dots,w_m$ according to global subsets $\mu_1,\mu_2,\dots,\mu_m$, and hashing each substring to obtain hash values $h_1,h_2,\dots,h_m$.
Then, instead of storing the hashes as in enrollment, the system performs a lookup (in constant time) on each obtained hash $h_i$, within the database.
If the system finds a number of matching hashes for some identity $\ID$ that is greater than some threshold $\threshold$, it returns $\ID$; otherwise, it returns $\bot$ to indicate that authentication has failed.

\paragraph{Correctness.}
For the same person, the templatization model and LSH guarantee that different biometric scans result in bit strings $v$ and $v'$ that are close to each other, with high probability.
As a result, Canetti \emph{et al.}~\cite{JC:CFPRS21} observed that some number of the $m$ substrings $w$ of $v$ will match the corresponding substrings $w'$ of $v'$, with high probability.
Thus, with high probability, there will be enough hashed values during authentication time that match with the given user's enrolled hashes.
On the other hand, for different people, since the templatization model and LSH guarantee that their corresponding biometric scans will result in bit strings $v$ and $v'$ that are far apart, very few (if any) of the substrings of $v$ will exactly match any of the $m$ substrings of $v'$, with high probability.
Therefore, with high probability, not enough hashed value matches will be found during authentication time, and thus the system does not allow incorrect authentication.
See Section~\ref{sec:exp} for experimental verification of these phenomena.

\paragraph{Security.}
Security comes from our assumption that biometric data has \emph{high entropy}.
Indeed, if this is the case, and the parameters of the templatization model, LSH, and substring length $k$ are chosen appropriately, then each resulting bit string $w_i$ will also have high entropy.
Therefore, the cryptographic hash function will output \emph{random-looking} hash values $h_i$, that cannot be used to recover information about the original hashed substrings $v_{i}$ (nor the original biometric data or template).
In Section~\ref{sec:exp} we empirically validate these assumptions by employing standard entropy estimation~\cite{CCS:SDFAMR25,Daugman} to show that the hashed substrings indeed have high entropy.
\section{Concrete System Instantiation}\label{sec:concrete}
In this section, we give concrete instantiations of the abstracted components of our high-level protocol from the previous section.

\subsection{Templatization Model}
For our templatization model, we use Arcface~\cite{arcface} with 1024-dimensional templates.
Arcface adds additive angular margin loss to the loss function to maximize class separability, while pushing members of the same class closer together.
This increases accuracy on facial recognition tasks and intuitively also increases the amount of entropy that our system can capture.
Indeed, if templates for different identities are more separated and those for the same identity are closer together in euclidean space, this will translate to the corresponding LSH outputs having higher hamming distance and lower hamming distance for different and same identities, respectively.
Thus, longer substrings of these outputs will be much less likely to collide between different identities, while for the same identity, the likelihood of collision will remain high.
Therefore, we can sample longer substrings and thus capture more entropy in each of these substrings, while maintaining low error rates.

\subsection{Locality-Sensitive Hash Function}
For the LSH, we use the state-of-the-art OrthoHash~\cite{orthohash} with 4096-bit outputs.
Orthohash is based on \emph{deep hashing}---i.e., deep learning techniques, followed by some binarization process---instead of traditional statistical techniques like random projections.
The objective function used in OrthoHash simultaneously results in discriminative outputs, i.e., intra-class distances are small while inter-class distances are large, and also minimizes the quantization error from binarizing the continuous outputs.

\subsection{Subset Generation}\label{sec:subsets}
For the generation of subsets, we use \emph{$\zeta$-sampling} from Shukla \emph{et al.}~\cite{CCS:SDFAMR25}.
Instead of using the naive approach of sampling each bit uniformly at random for each substring, $\zeta$-sampling weights the bits according to some measure.
While, Shukla \emph{et al.} examine a few different such measures, we focus on a mutual-information based approach, which intuitively measures how much each bit in the LSH output reduces the uncertainty about the identity of the biometric:
$$\mathsf{mi}_i \coloneqq I(\mathsf{bit}_i;\mathsf{ID}) = H(\mathsf{ID}) - H(\mathsf{ID}|\mathsf{bit}_i),$$
where $H$ is the entropy function, $\mathsf{bit}_i$ is a random variable representing the value of bit $i$, and $\mathsf{ID}$ is a random variable representing the identity of the biometric.
Then, the sampling weight $\mathsf{wt}_i$ of each bit $\mathsf{bit}_i$ is proportional to this mutual information value, potentially further weighted by some value $\zeta$:
$$\mathsf{wt}_i \coloneqq \frac{\mathsf{mi}_i^\zeta}{\sum_j \mathsf{mi}_j^\zeta}.$$
This mutual-information based approach simultaneously increases accuracy (as bits that are more informative about the identity are more likely to be sampled) and also increases the entropy captured in each substring (as bits that are more discriminative are more likely to be sampled, and thus the length of substrings can be longer).
\section{Evaluation}
Now we provide empirical evaluation of our system, showing that it is efficient, accurate, and secure.

\subsection{Evaluation Dataset}
Due to the lack of available large-scale public datasets of high-quality face images with multiple images per identity, we use synthetic face images generated by the GAN-Control library~\cite{gancontrol}.
This library allows us to generate multiple realistic images of the same synthetic identity with control over facial expressions, lighting conditions, head poses, etc.
We generate two images each for approximately 100,000 synthetic identities.
See Figure~\ref{fig:faces} for examples of synthetic images used in our experiments.

\begin{figure}[t]
\centering

\textbf{Random Expressions}\\[0.5em]
\resizebox{\textwidth}{!}{%
\begin{tabular}{@{}c@{\hspace{0.1cm}}c@{\hspace{0.5cm}}c@{\hspace{0.1cm}}c@{\hspace{0.5cm}}c@{\hspace{0.1cm}}c@{\hspace{0.5cm}}c@{\hspace{0.1cm}}c@{\hspace{0.5cm}}c@{\hspace{0.1cm}}c@{}}
\includegraphics[width=0.09\textwidth]{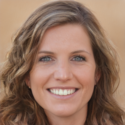} &
\includegraphics[width=0.09\textwidth]{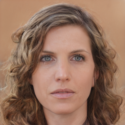} &
\includegraphics[width=0.09\textwidth]{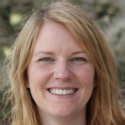} &
\includegraphics[width=0.09\textwidth]{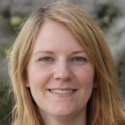} &
\includegraphics[width=0.09\textwidth]{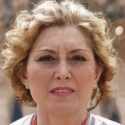} &
\includegraphics[width=0.09\textwidth]{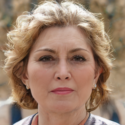} &
\includegraphics[width=0.09\textwidth]{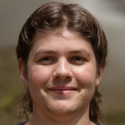} &
\includegraphics[width=0.09\textwidth]{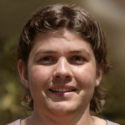} &
\includegraphics[width=0.09\textwidth]{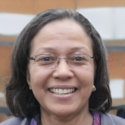} &
\includegraphics[width=0.09\textwidth]{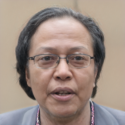} \\
\multicolumn{2}{c}{ID 1} &
\multicolumn{2}{c}{ID 2} &
\multicolumn{2}{c}{ID 3} &
\multicolumn{2}{c}{ID 4} &
\multicolumn{2}{c}{ID 5}
\end{tabular}}

\vspace{1em}

\textbf{Similar Expressions}\\[0.5em]
\resizebox{\textwidth}{!}{%
\begin{tabular}{@{}c@{\hspace{0.1cm}}c@{\hspace{0.5cm}}c@{\hspace{0.1cm}}c@{\hspace{0.5cm}}c@{\hspace{0.1cm}}c@{\hspace{0.5cm}}c@{\hspace{0.1cm}}c@{\hspace{0.5cm}}c@{\hspace{0.1cm}}c@{}}
\includegraphics[width=0.09\textwidth]{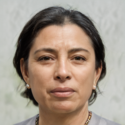} &
\includegraphics[width=0.09\textwidth]{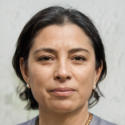} &
\includegraphics[width=0.09\textwidth]{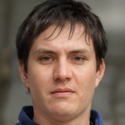} &
\includegraphics[width=0.09\textwidth]{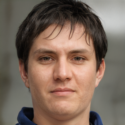} &
\includegraphics[width=0.09\textwidth]{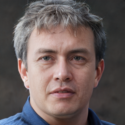} &
\includegraphics[width=0.09\textwidth]{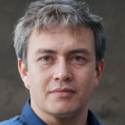} &
\includegraphics[width=0.09\textwidth]{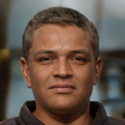} &
\includegraphics[width=0.09\textwidth]{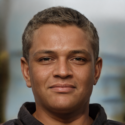} &
\includegraphics[width=0.09\textwidth]{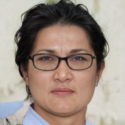} &
\includegraphics[width=0.09\textwidth]{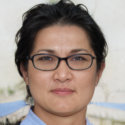} \\
\multicolumn{2}{c}{ID 1} &
\multicolumn{2}{c}{ID 2} &
\multicolumn{2}{c}{ID 3} &
\multicolumn{2}{c}{ID 4} &
\multicolumn{2}{c}{ID 5}
\end{tabular}}

\caption{Comparison of random expressions (top) and fixed expressions (bottom) for 5 different identities. Each ID shows two images side by side.}
\label{fig:faces}
\end{figure}

\subsection{Experimental Results}\label{sec:exp}
We now present the experimental results for our system, first explaining the insecure baseline we use for comparison, then describing the different types of experiments we run, and finally discussing the results of these experiments, including accuracy, computational efficiency, and entropy estimation.

\paragraph{Insecure Baseline.}
This insecure baseline simply stores the output of the ArcFace templatization model on the enrollment image of every identity, then during authentication, computes the euclidean distance between the template of the authentication image and the stored template for each enrolled identity, and authenticates if the distance is below some threshold.

\paragraph{Experiment Types.}
We run experiments on two sets of data:
(i) face images with different random expressions for each identity used during enrollment and authentication, and (ii) face images with two similar expressions for each identity used during enrollment and authentication, with the two expression styles fixed across identities.
See Figure~\ref{fig:faces} for examples of these two sets of data.
While (i) is a more challenging and realistic setting, publicly available templatization models are vastly inferior to (closed-source) industrial solutions~\cite{NISTFRTE}.
Moreover, as discussed earlier, the better the model is at separating the distance between same biometrics and different biometrics, the more entropy our system can capture, and thus the more secure it is.
Therefore, we use (ii) to come closer to matching the accuracy of industry models, and thus to show that our system can achieve very low error rates while also capturing a meaningful amount of entropy, which is crucial for security.
We still use (i) to empirically estimate the additional error that our secure system incurs compared to the insecure baseline of ArcFace, and to show that even in this more challenging setting, the error increase is not too large. 

In both settings, we evaluate the standard measures of False Negative Rate (FNR) and False Positive Rate (FPR).
FNR is the rate at which enrolled identities do not properly authenticate as themselves; FPR is the rate at which non-enrolled identities improperly authenticate as someone in the enrolled database.
We will also refer to generic \emph{accuracy}, for which we use the average of FNR and FPR as the \emph{error rate}.

\paragraph{Experiments with Random Expressions.}

We start by running experiments on synthetic faces generated with random expressions (top of Figure~\ref{fig:faces}).
As Extended Data Figure~\ref{fig:rand-error-rates} shows, the parameters of our system that yield the lowest error results in a $\lesssim 1.5\times$ increase in the error rate compared to the baseline insecure version;
this means that if the baseline insecure system has 99.9\% accuracy, our system will have $\approx 99.85\%$ accuracy.
Unfortunately, as mentioned earlier, the accuracy of even the baseline insecure system based on arcface for faces with random expressions is quite low.
Moreover, as a side effect, the difference between distances of same and different identities is not large, and thus the length of substrings and entropy captured at this accuracy is quite low (as explained above).
This low accuracy of the baseline system is due to a number of reasons, most notably due to the lack of large high-quality public datasets of face images needed for successful training;
it is known that industry models are able to achieve much better accuracy as high as $\approx 99.95\%$ for databases of size 12 million~\cite{NISTFRTE}.

\paragraph{Remedy for Low Captured Entropy.}
We propose a remedy to the above low entropy issue.
Indeed, although the length of the substrings and thus the entropy captured is low, if we sample a random cryptographic key and use it as input to a (also deterministic) Pseudo-Random Function (PRF)~\cite{GolGolMic86} to generate the outputs instead of a cryptographic hash function, then the outputs will be computationally indistinguishable from random, as long as the key remains secret.
One way to keep the key secret is to store it in a Trusted Execution Environment~\cite{munoz2023tee}, which is a secure environment isolated that is inaccessible from the rest of a server's operating system, and have all PRF computations happen inside this secure environment.
Thus, our security objective can still be achieved, albeit under a different assumption that the PRF key cannot be retrieved from the TEE.

\paragraph{Experiments with Similar Expressions.}
\begin{figure}[t]
    \centering
    \begin{minipage}{0.47\linewidth}
        \centering
        \includegraphics[width=\linewidth]{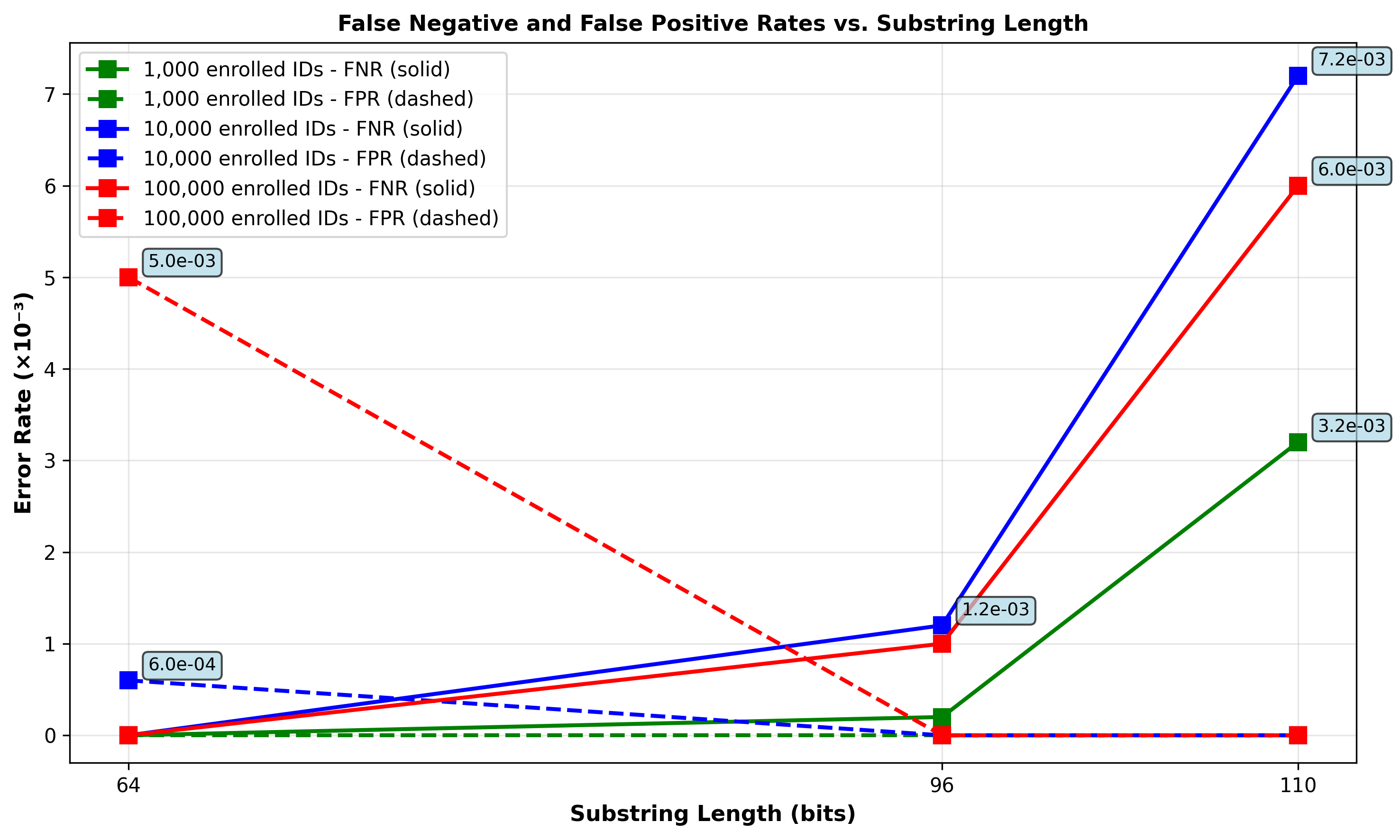}
    \end{minipage}
    \hfill
    \vrule
    \hfill
    \begin{minipage}{0.47\linewidth}
        \centering
        \includegraphics[width=\linewidth]{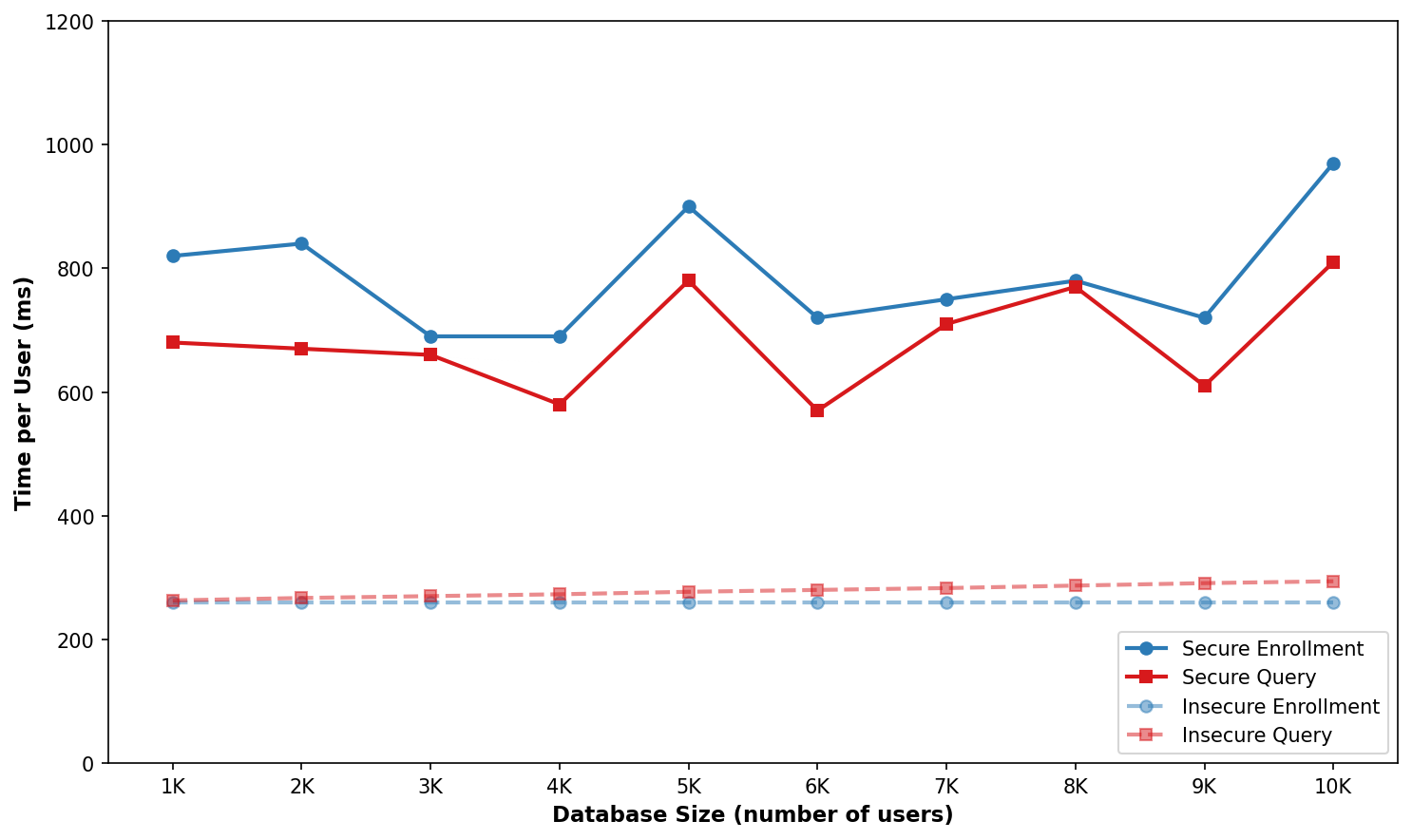}
    \end{minipage}
    \caption{\textbf{Left}. Error rates for databases of size 1,000, 10,000, and 100,000 enrolled IDs. FNR: False negative rate; FPR: False Positive Rate. False negatives are calculated over first 1000 enrolled identities. False positives are calculated over 1000 non-enrolled identities. Runs of our secure system are averaged over 5 trials.
    \textbf{Right}. Average time per user for enrollment and authentication queries; both secure (solid) and insecure (dashed, faded).}
    \label{fig:experiments}
\end{figure}
To come closer to matching the accuracy of industry models, we run experiments on synthetic faces with fixed, similar expressions (bottom of Figure~\ref{fig:faces}).
As shown in Figure~\ref{fig:experiments} (left), our secure system can achieve very low error rates in this setting, across database sizes.
Indeed, even for substring lengths up to 110 bits, we maintain very low and performant error rates.
As we increase the substring length, increasing the security of the system, we see a gradual increase in \emph{only} false negatives, but the system remains largely performant.

\paragraph{Computational Efficiency.}
We now discuss computational efficiency of our system.
First, for storage per identity, our sytem involves storing $m$ key-value pairs, where the key is a hash value and the value is the identity of the user.
The hash value length we use is the standard 256 bits and we use 32-bit integers to represent identities.
Thus, the total storage per identity is $m\cdot (256+32)$ bits.
We use $m = 250$K for the experiments in Figure~\ref{fig:experiments} (left), which results in a storage of $\approx 9$ MB per identity, which is quite reasonable for modern servers.
However, due to memory constraints, we recommend sharding the database so that each shard contains the (hash value, identity) pairs for at most 10K identities.
Authentication queries would be processed by each shard in parallel, whose results would be processed together by an authentication server.

With this in mind, in Figure~\ref{fig:experiments} (right) we present the average time per user for both enrollment and authentication queries, for database sizes up to 10K.
Both enrollment and queries take under 1 second for all database sizes and queries take as little as 680 milliseconds for database size of 1K.
We note that even though in theory, enrollment and query time should be fixed across database sizes, since we are only performing constant-time hash table operations, both actually slightly increase with larger databases.
This is because as the enrollment database grows, the hash table used for storing and looking up biometric templates becomes larger, leading to increased CPU cache misses during dictionary lookups, and therefore slower main memory fetches.
Compared to the insecure baseline, our system maintains competitive enrollment and query times.

\paragraph{Entropy Estimation.}
Measuring the true entropy of samples from a distribution requires an exponentially large number of samples~\cite{STOC:ValVal11,valent}.
However, there are established methods for estimating the min-entropy of biometrics in the literature.
To estimate the min-entropy of the substrings that we hash in our system, we use the standard method of Shukla \emph{et al.}~\cite{CCS:SDFAMR25}, which is a modification of Daugman's test~\cite{Daugman}.
This estimation involves computations on the mean, $\mu_{\textsf{Unlike}}$, and standard deviation, $\sigma_{\textsf{Unlike}}$, of pairwise Hamming distances between the substrings of different identities in the dataset:
$$e = \frac{-\mu_{\textsf{Unlike}}(1-\mu_{\textsf{Unlike}})\log\max\{\mu_{\textsf{Unlike}},1-\mu_{\textsf{Unlike}}\}}{\sigma^2_{\textsf{Unlike}}}.$$

In Extended Data Table~\ref{tab:entropy}, we show that the minimum (most important for security guarantees) of estimates of min-entropy for all substrings is quite high, reaching 74.3 bits of entropy for substring lengths of 110 bits, which is a very meaningful level of security.
We also provide the mean and maximum estimates across substrings for reference.

\iftoggle{submission}{
}{
\section*{Disclaimer}
This paper was prepared for informational purposes by the Artificial Intelligence Research group of JPMorgan Chase \& Co. and its affiliates (``JP Morgan'') and is not a product of the Research Department of JP Morgan. JP Morgan makes no representation and warranty whatsoever and disclaims all liability, for the completeness, accuracy or reliability of the information contained herein. This document is not intended as investment research or investment advice, or a recommendation, offer or solicitation for the purchase or sale of any security, financial instrument, financial product or service, or to be used in any way for evaluating the merits of participating in any transaction, and shall not constitute a solicitation under any jurisdiction or to any person, if such solicitation under such jurisdiction or to such person would be unlawful.

}

\iftoggle{submission}{\putbib\end{bibunit}}{\bibliography{cryptobib/abbrev3,references,cryptobib/crypto}}

\newpage
\renewcommand{\appendixname}{}
\begin{appendices}
\iftoggle{submission}{\section*{Methods}}{}
\iftoggle{submission}{\begin{bibunit}}{}
\section{Security}\label{sec:security}

\alex[inline]{I think for now, our security model should just entail the adversary obtaining a leak of the database, and that's it? (i.e., no ability to query the authenticator?)
Doing something stronger certainly wouldn't hurt, but maybe a bit out of scope?}
Here we present our formal security model for biometric authentication systems.
We start by providing some preliminary definitions that we will use in our security model and proof.

\subsection{Model Preliminaries}
We start with the standard notion of \emph{min-entropy}, which is a measure of the unpredictability of a random variable.
Intuitively, the best that an adversary can do to guess $A$ is to guess the most likely value of $A$, which has probability $\max_a \Pr[A=a]$.
\begin{definition}[Min-Entropy]
    The \emph{min-entropy} of a random variable $A$ is defined as: 
    $$H_\infty(A) = -\log_2\left(\max_a \Pr[A=a]\right).$$
\end{definition}

Min-Entropy is only defined for a random variable in isolation.
To measure the min-entropy in the presence of other correlated random variables, i.e., information stored in a database, \cite{dodFuzExt} introduced the following notion of \emph{average min-entropy}.
This notion intuitively defines how unpredictable a random variable remains, even given the correlated information, e.g., the database, and thus corresponds to security of the database.
\begin{definition}[Average Min-Entropy]
    The \emph{average min-entropy} of a random variable $A$ conditioned on another random variable $B$ is defined as:
    $$\aveminent(A\mid B) = -\log_2\left(\Exp_{b\gets B}\left[\max_a \Pr\left[A=a\mid B=b\right]\right]\right) = -\log_2\left(\Exp_{b\gets B}\left[2^{-H_\infty(A\mid B=b)}\right]\right).$$
\end{definition}
See~\cite{dodFuzExt} for more details, including why it is important to take the $\log$ outside of the expectation, for cryptographic applications.

Average min-entropy is a purely information-theoretic notion of security, meaning that security holds even against adversaries with unbounded computational resources.
However, in this paper we use cryptographic tools that are secure against (weaker, but more realistic) polynomial-time adversaries to enable our more performant system.
Therefore, we use the computational version of average min-entropy, conditional HILL entropy, introduced in~\cite[Definition 3]{EC:HsiLuRey07} (which in turn is based on notions from~\cite{HILL,BarakEnt}).

First, we provide the standard notion of computational distance between distributions, commonly used in the cryptographic literature.
Typically, we call some system \emph{secure} if the computational distance between any information that the adversary can infer from the real system and any information that the adversary can obtain from an ideal system that is independent of the secret is \emph{negligible} (i.e., super-polynomially small) in some given security parameter, $\secparam$.
\begin{definition}[Computational Distance]
    Let $X$ and $Y$ be two distributions.
    For a distinguisher $D$, we write the \emph{computational distance} between $X$ and $Y$ as
    $$\delta^D(X,Y) = |\Pr[D(X)=1]-\Pr[D(Y)=1]|.$$

    We say that the distance $\delta^D(X,Y)$ is \emph{negligible} in some security parameter $\secparam$, or $\delta^D(X,Y) = \negl$, if for every polynomial $p(\cdot)$, there exists $\secparam_0$ such that for every $\secparam\geq \secparam_0$, we have $\delta^D(X,Y) < 1/p(\secparam)$.
\end{definition}
\begin{definition}[Conditional HILL Entropy]
Let $(W, S)$ be a pair of random variables.
$W$ has \emph{HILL entropy at least $m$ conditioned on $S$}, denoted
$\HILL(W \mid S)\geq m$ 
if there exists a joint distribution $(X,S)$ such that $\aveminent(X\mid S)\geq m$ and $\delta^D((W,S),(X,S)) = \negl$ for any probabilistic polynomial-time distinguisher $D$.
\end{definition}
Conditional HILL entropy says that no adversary can distinguish the real joint distribution of the biometric data $W$ and the database $S$ from an ideal distribution in which a different distribution $X$ is swapped in for $W$ that has high average min-entropy given the database $S$, meaning that $X$ is unpredictable even given the database.
Therefore, security of the biometric authentication system follows.

As a stepping stone to prove the full security of our biometric authentication system, we use a notion called \emph{virtual black box obfuscation}~\cite[Definition 2]{EC:LynPraSah04}.
In essence, each hash $h$ we store in the database can be thought of as an obfuscation of a point function $P_v$ defined by $P_v(x) = 1$ if $x=v$ and $0$ otherwise, where $v$ is the substring of the LSH output that we cryptographically hash to obtain $h$.
\begin{definition}[Virtual Blackbox Obfuscation in the Random Oracle Model]
    An oracle algorithm $\mathcal{O}$ is a \emph{virtual blackbox obfuscator in the random oracle model} for a function family $\mathcal{F}$ if for every probabilistic polynomial-time adversary $\Att$, there exists a probabilistic polynomial-time simulator $\Sim$ such that for every $F\in \mathcal{F}$, we have
    $$|\Pr[\Att^{\RO}(\mathcal{O}^{\RO}(F))=1]-\Pr[\Sim^F(1^\secparam) = 1]|\leq\negl,$$
    where $\RO$ is the random oracle.
\end{definition}

In, \cite[Lemma 6]{EC:LynPraSah04} it is indeed shown that $h\gets\Hash(v)$ which we store in our database is a virtual blackbox obfuscation of the point function $P_v$ in the random oracle model.
At a high level, the simulator $\Sim$ with blackbox access to $P_v$ can simply sample a random value $h$ and run the adversary $\Att$ on input $h$.
Proofs in the random oracle model typically allow the simulator to answer the adversary's queries to the random oracle $\RO$.
Thus, whenever $\Att$ queries the random oracle $\RO$ on some input $x$, $\Sim$ can query $P_v$ on $x$; if $P_v(x)=1$, then $\Sim$ can answer $\Att$'s query with $h$, and otherwise $\Sim$ can answer with a random value.
\begin{lemma}[Obfuscation of Point Functions in the Random Oracle Model]\label{lem:obf}
    Let $P_v:\{0,1\}^k\to\{0,1\}$ be a function defined by $P_v(x) = 1$ if $x=v$ and $0$ otherwise.
    Define $\mathcal{P} = \{P_v : v\in\{0,1\}^k\}$.
    Then there exists a virtual blackbox obfuscator $\mathcal{O}$ for $\mathcal{P}$ in the random oracle model with security loss $0$, which obfuscates given $P_v$ by storing $r=\RO(v)$ and on input $x\in\{0,1\}^k$, checks if $\RO(x)=r$; if so outputs $1$, else $0$.
\end{lemma}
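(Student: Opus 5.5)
The plan is to exhibit the simulator $\Sim$ sketched just before the statement and show that, for every adversary $\Att$ and every $v\in\{0,1\}^k$, the view of $\Att$ in the real experiment $\Att^{\RO}(\mathcal{O}^{\RO}(P_v))$ has \emph{exactly} the same distribution as its view inside $\Sim^{P_v}(1^\secparam)$. Identical views give $|\Pr[\Att^{\RO}(\mathcal{O}^{\RO}(P_v))=1]-\Pr[\Sim^{P_v}(1^\secparam)=1]| = 0$, which is the claimed security loss of $0$. I model $\RO$ as a uniformly random function $\{0,1\}^*\to\{0,1\}^\ell$ and take the obfuscation handed to $\Att$ to be the string $r=\RO(v)$; the evaluation procedure is public and fixed.

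\textbf{The simulator.} On input $1^\secparam$, $\Sim$ samples $r\getsr\{0,1\}^\ell$ and starts $\Att(r)$ with an empty table $T$. It answers each random-oracle query $x$ of $\Att$ as follows:
\begin{enumerate}[label=(\roman*)]
\item if $x\in T$, it returns $T[x]$;
\item otherwise, if $x\in\{0,1\}^k$ and $P_v(x)=1$, it sets $T[x]\coloneqq r$;
\item otherwise, it sets $T[x]$ to a fresh uniform string.
\end{enumerate}
Finally $\Sim$ outputs whatever $\Att$ outputs. Each query costs $\Sim$ one oracle call to $P_v$ and polynomial work, so $\Sim$ is probabilistic polynomial-time whenever $\Att$ is.

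\textbf{Identical views (lazy sampling).} In the real experiment, $\RO$ is a random function, so I can lazily sample it. Drawing $\RO(v)$ first gives a uniform $r$, and this is exactly what $\Att$ receives. Every later query $x\neq v$ is answered with an independent uniform value, repeated queries are answered consistently, and the query $x=v$ (if it ever occurs) is answered with $r$. The simulator samples $r$ uniformly and answers queries by the same rule. It can do so because $P_v(x)=1$ if and only if $x=v$, so its check ``$x=v$'' is implemented through the oracle $P_v$ without knowing $v$. The two processes therefore define the same joint distribution of $(r,\text{answers to }\Att\text{'s queries},\text{random coins of }\Att)$. An induction on the number of queries makes this precise: conditioned on identical transcripts so far, the next answer has the same conditional distribution in both worlds. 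Hence $\Att$'s output has the same distribution in both worlds.

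\textbf{Functionality and the main obstacle.} For completeness I will also note that the obfuscated program computes $P_v$: on input $v$ it outputs $1$ with certainty, and on any $x\neq v$ it outputs $1$ only if $\RO(x)=\RO(v)$, which happens with probability $2^{-\ell}$, which is negligible. The only delicate point is making the lazy-sampling equivalence airtight. This means handling repeated queries and queries outside $\{0,1\}^k$ (answered uniformly in both worlds), and making sure the obfuscator's own evaluation of $\RO(v)$ is the same oracle point that $\Att$ may later query. The induction over the query transcript handles all of these, and no computational assumption or probability loss enters, so the bound is exactly $0$.
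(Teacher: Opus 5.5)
Your proposal is correct and takes the same route as the paper, which defers to Lemma~6 of Lynn--Prabhakaran--Sahai and sketches exactly this simulator: sample a random $h$, run $\Att$ on it, and answer each oracle query $x$ by querying $P_v(x)$, returning $h$ on a hit and a fresh random value otherwise. Your lazy-sampling induction, with the table $T$ keeping repeated queries consistent, makes precise the paper's informal claim that the real and simulated views coincide, so the loss is exactly $0$.
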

Note that recent literature on Fuzzy Extractors often make use of \emph{digital lockers}~\cite{EC:CanDak08}, which are related to obfuscations of point functions, but obfuscate the function $P_{v, s}$ defined by $P_{v, s}(x) = s$ if $x=v$ and $\bot$ otherwise.

\subsection{Our Formal Model}
Finally, we present our formal model.
We call a biometric authentication protocol $\SBA = (\Setup, \Enr, \Auth)$ a \emph{Secure Biometric Authentication protocol with Cloud Database} if it satisfies the following security definition.
Intuitively, the protocol is secure if for every enrolled identity, the biometric data of that identity still has high HILL entropy even given the public parameters and the entire database of enrolled users, meaning that the biometric data is computationally unpredictable even given the database.
\begin{definition}[Secure Biometric Authentication with Cloud Database]
    Consider any $N=\poly(\secparam)$, and any probabilistic polynomial-time adversary $\Att$.
    Let $W_1,\dots, W_N\getsr\BioDist$ be random variables representing the biometric samples of $N$ identities.
    Let $\pp\getsr\Setup(1^\secparam)$ be the public parameters of the system.
    For $i\in [N]$, execute $\pub_i\getsr\Enr(\pp,W_i)$.\alex{Might not be that the ultimate databnase just consists of each $\pub_i$ separately}
    Biometric Authentication protocol $\SBA = (\Setup, \Enr, \Auth)$ is \emph{secure} if for every $i\in[N]$:
    $$\HILL(W_i \mid (\pp, \pub_1,\dots,\pub_N)) \geq \secparam.$$
\end{definition}

\begin{remark}[Correctness]
    One could include a formal correctness notion for biometric authentication as well, however, since our system only provides empirical evidence of correctness, we omit it.
\end{remark}
\section{Formal Protocol Specification}\label{sec:formal_prot}

In Algorithms~\ref{alg:setup}-\ref{alg:auth} below, we formally present our entire biometric authentication protocol.

For notation, we use 
$n$ for the output length of the LSH;
$m$ for the number of subsamples;
$k$ for the length of the subsamples.
We use the $\LSH$ to denote the used locality-sensitive hash and let $\Hash$ to denote the cryptographic hash function, which is modeled as a random oracle~\cite{CCS:BelRog93} in our security proof below.

%
%

The setup algorithm presented in Algorithm~\ref{alg:setup} samples $m$ random subsets of $[n]$ of size $k$ according to some distribution $\mathcal{D}$.
These subsets will correspond to the substrings of LSH output that we cryptographically hash during enrollment and authentication, as shown below.
\begin{algorithm}
\caption{$\Setup$}\label{alg:setup}
\begin{algorithmic}[1]
\Statex \textbf{Parameters:} Distribution $\mathcal{D}$ over $[n]$ assigning weight $d_j$ to each bit $j \in [n]$
\For{$i \in [m]$}
	\State Sample $\mu_i = \{j_1, \dots, j_k\} \subset [n]$ by drawing $k$ elements without replacement from $\mathcal{D}$
\EndFor
\State \Return $\pp \gets (\mu_1, \dots, \mu_m)$
\end{algorithmic}
\end{algorithm}

The enrollment algorithm presented in Algorithm~\ref{alg:enroll} takes as input a user identity $\ID$ and biometric data $x$, and outputs a set of hash values that are stored in the server's database and map to $\ID$.
The algorithm first runs the data $x$ through a templatization model $\mathcal{M}$ to obtain a template $u$, then applies the LSH to obtain a bit string $v$, and finally hashes the subsamples of $v$ according to the subsets $\mu_i$ obtained during setup, to obtain hash values $h_i$ that are sent to the server for storage.
\begin{algorithm}
\caption{$\Enr$}\label{alg:enroll}
\begin{algorithmic}[1]
\Statex \textbf{Parameters:} Templatization model $\mathcal{M}$, locality-sensitive hash $\LSH$
\State $u \gets \mathcal{M}(x)$ \Comment{Extract template from image}
\State\label{step:subsample_hash:lsh}$v \gets \LSH(u)$ \Comment{Map template to bit string}
\For{$i \in [m]$}\label{step:subsample_hash:hash_subsamples} \Comment{Hash each subsample}
	\State $v_{\mu_i} \gets v_{j_1} \| \dots \| v_{j_k}$ where $\mu_i = \{j_1, \dots, j_k\}$
	\State $h_i \gets \Hash(v_{\mu_i})$
\EndFor
\State Send $(h_1, \dots, h_m)$ to the server
\State Server inserts $(h_i, \ID)$ into hash map $\DB$ for each $i \in [m]$
\end{algorithmic}
\end{algorithm}

Finally, the authentication algorithm presented in Algorithm~\ref{alg:auth} takes as input biometric data $x$ and outputs a user identity $\ID$ if the authentication is successful, or $\bot$ if it fails.
The algorithm first runs the data $x$ through the same steps as enrollment to obtain hash values $h_i$, and then sends these hash values to the server, who looks them up in the database and returns the identity $\ID$ that has at least $\threshold$ matches, where $\threshold$ is a threshold parameter that can be tuned to achieve the desired trade-off between false positives and false negatives.
\begin{algorithm}
\caption{$\Auth(\pp, x)$}\label{alg:auth}
\begin{algorithmic}[1]
\Statex \textbf{Parameters:} Templatization model $\mathcal{M}$, locality-sensitive hash $\LSH$, threshold $\threshold$
\State $u \gets \mathcal{M}(x)$ \Comment{Extract template from image}
\State $v \gets \LSH(u)$ \Comment{Map template to bit string}
\For{$i \in [m]$} \Comment{Hash each subsample}
	\State $v_{\mu_i} \gets v_{j_1} \| \dots \| v_{j_k}$ where $\mu_i = \{j_1, \dots, j_k\}$
	\State $h_i \gets \Hash(v_{\mu_i})$
\EndFor
\State Send $(h_1, \dots, h_m)$ to the server
\State Initialize dictionary $\mathsf{counts}[\cdot] \gets 0$
\For{$i \in [m]$}
	\If{$h_i \in \DB$}
		\State $\mathsf{counts}[\DB[h_i]] \gets \mathsf{counts}[\DB[h_i]] + 1$
	\EndIf
\EndFor
\For{$\ID \in \mathsf{counts}$}
	\If{$\mathsf{counts}[\ID] \geq \threshold$}
		\State \Return $\ID$
	\EndIf
\EndFor
\State \Return $\bot$
\end{algorithmic}
\end{algorithm}

\section{Security Proof}\label{sec:security_proof}
Finally, we prove that our biometric authentication system presented in the previous section is secure.
\begin{theorem}
    Consider, $\SBA$ presented in Algorithms~\ref{alg:setup}-\ref{alg:auth}.
    Assume that for every $i\in[\eta]$, the distribution $\BioDist_i$ of the substring of LSH output $V$ corresponding to subset $\mu_i$ sampled during $\Setup$ of Algorithm~\ref{alg:setup} has min-entropy at least $\secparam$, i.e., $H_\infty(\BioDist_i)\geq \secparam$ for every $i\in[\eta]$.
    Then $\SBA$ is a secure biometric authentication protocol with external database.
\end{theorem}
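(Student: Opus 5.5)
The plan is to build, for a fixed target index $i^*\in[N]$, an ideal joint distribution $(X,S)$ and then check the two requirements of conditional HILL entropy. Here $S$ stands for the public parameters together with a simulated database that is independent of $W_{i^*}$, and $X$ is a fresh sample with $X\getsr\BioDist$ drawn independently of $S$. The requirements are that $X$ has high average min-entropy given $S$, and that $(W_{i^*},\pp,\pub_1,\dots,\pub_N)$ is computationally indistinguishable from $(X,S)$.

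Concretely, $S$ keeps $\pp$ and every $\pub_j$ with $j\neq i^*$ exactly as in the real experiment. It replaces $\pub_{i^*}=(\Hash(v_{\mu_1}),\dots,\Hash(v_{\mu_m}))$ with $m$ uniformly random strings $(r_1,\dots,r_m)$. Because $X$ is independent of $S$, we get $\aveminent(X\mid S)=H_\infty(\BioDist)$. This is at least $\secparam$, since each substring $V_{\mu_i}$ is a deterministic function of the biometric and has min-entropy $\ge\secparam$ by hypothesis; I will read the statement's "$[\eta]$" as ranging over the $m$ subsets.

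Indistinguishability goes through a hybrid argument over the $m$ hashes of user $i^*$, with $\Hash$ modeled as a random oracle. In hybrid $H_\ell$, the first $\ell$ hashes of $\pub_{i^*}$ are replaced by random strings. For consecutive hybrids I invoke Lemma~\ref{lem:obf}: $\Hash(v_{\mu_\ell})$ is a virtual black-box obfuscation of the point function $P_{v_{\mu_\ell}}$. The simulator programs the oracle and replaces the value by a random $r_\ell$. The only way a distinguisher notices the change is by querying $\RO$ at the point $v_{\mu_\ell}$, and its view is otherwise independent of that point. Such a query happens with probability at most $q\cdot 2^{-\secparam}$ for $q=\poly(\secparam)$ oracle queries, because $V_{\mu_\ell}$ has min-entropy $\ge\secparam$. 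Summing over $m=\poly(\secparam)$ hybrids gives total distance $mq2^{-\secparam}=\negl$. A final hybrid swaps $W_{i^*}$ for an independent $X$; this is a perfect equivalence, since once $\pub_{i^*}$ is random the remaining database no longer depends on $W_{i^*}$. That holds under the model where the $W_j$ are sampled independently.

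The main obstacle is the correlation among the $m$ substrings of the same LSH output $v$. They share bits, so after earlier hashes are revealed (or queried), the conditional min-entropy of $V_{\mu_\ell}$ could be far below $\secparam$. A marginal min-entropy bound therefore does not by itself make the guessing probability small inside the hybrid. My first approach is to order the hybrids in reverse, which removes the leakage from the hashes themselves. The adversary then has only random strings plus oracle access, so its chance of ever querying any of the $m$ points is bounded by a union bound, $\sum_\ell q\cdot\max\Pr[V_{\mu_\ell}=\cdot]\le mq2^{-\secparam}$. The point is that before any of the points is hit, the view is independent of $v$. This event-based "bad query" argument, which considers all $m$ points at once, is the step I would write out most carefully, together with the independence assumption across users.
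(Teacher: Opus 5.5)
The step that fails is your claim that a distinguisher can notice the replacement of the $\ell$-th hash only by querying $\RO$ at $v_{\mu_\ell}$, which min-entropy makes unlikely.

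In conditional HILL entropy the distinguisher receives the first component too. In every hybrid $H_0,\dots,H_m$ it is handed $W_{i^*}$ itself, and in your model it has oracle access. So it can compute $v=\LSH(\mathcal{M}(W_{i^*}))$, query $\RO(v_{\mu_1})$, and compare the answer with the first entry of $\pub_{i^*}$. This test passes with probability $1$ in $H_0$, but only with negligible probability in $H_1$ and in your final $(X,S)$.

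Several of your steps fall for the same reason. The bound $q\cdot 2^{-\secparam}$ applies to an adversary that must \emph{guess} $v_{\mu_\ell}$, not to one holding $W_{i^*}$. Lemma~\ref{lem:obf} cannot be invoked: a VBB guarantee says nothing against an adversary whose auxiliary input determines the hidden point. Your bad-event claim that ``before any point is hit, the view is independent of $v$'' is false from the start, and reordering the hybrids does not change this.

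A better choice of $X$ does not rescue the plan. Any $(X,S)$ indistinguishable from $(W_{i^*},S)$ must pass the public check $\Hash(X_{\mu_\ell})=h_\ell$. So $X$ has to agree with $W_{i^*}$ on every sampled substring, which an $X$ independent of $S$ cannot do. This is the standard reason HILL entropy is the wrong notion when the condition is a publicly verifiable function of the secret; it is the motivation for the unpredictability entropy of Hsiao--Lu--Reyzin.

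There is also a smaller hole. Two substrings of the same user may coincide as strings, and per-substring min-entropy does not exclude this. Then $\pub_{i^*}$ contains a repeated value, and replacing one copy by a fresh $r_\ell$ is detectable without any oracle query.

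The paper's proof follows essentially the same route and has the same defect. It takes an independent copy $V_i'$, uses Lemma~\ref{lem:obf} to strip one hash, then argues ``the simulator cannot guess $V_i$''. But its simulator runs without $V_i$ while $D$ is given $V_i$. It also ignores the correlation with the other $m-1$ hashes of the same user, which you rightly flagged. So following the paper will not close your gap.

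What your random-oracle argument, with the union bound over all $m$ points at once, does establish is an unpredictability statement. A PPT adversary given only $(\pp,\pub_1,\dots,\pub_N)$ and $q$ oracle queries outputs some $v_{\mu_\ell}$ of user $i^*$ (\emph{a fortiori} $W_{i^*}$) with probability roughly $mq2^{-\secparam}$. This holds once you also account for cross-user collisions and for the equality pattern among stored values. On that statement your treatment is more careful than the paper's.

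The HILL statement as literally defined can only be met by an $X$ that agrees with $W_{i^*}$ on the sampled substrings. An example is $X:=W_{i^*}$, with $\aveminent$ taken over the oracle rather than conditioned on it. That turns the question into an information-theoretic one about what the stored values reveal, not the computational hybrid argument you propose.
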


\begin{proof}
    Fix some $j\in[N]$.
    Let $V'_{i}$ be a random variable drawn from the distribution of substrings of LSH outputs corresponding to subset $\mu_i$ that is independent of $(\pp, \pub_1,\dots,\pub_N)$ (and in particular, $\pub_j$).
    Then, we have $\aveminent(V'_i \mid (\pp, \pub_1,\dots,\pub_N)) = H_\infty(V_i) \geq \secparam$.

    It remains to show that $\delta((V_i, (\pp, \pub_1,\dots,\pub_N)), (V'_i, (\pp, \pub_1,\dots,\pub_N))) =\negl$.
    First, by the security of the implicit obfuscation of point functions in our construction (Lemma~\ref{lem:obf}), we have that for every distinguisher $D$,\alex{Formally, we would need to build a simulator $\Sim'$ that uses the obfuscation simulator $\Sim$}
    \begin{align}\label{eq:hyb1}
    |\Pr[D(V_i, & (\pp, \pub_1,\dots,\pub_N)) = 1] \notag \\
    & - \Pr[\Sim^{P_{V_i}}(\pp, \pub_1,\dots,\pub_{j-1}, \pub_j', \pub_{j+1},\dots,\pub_N) = 1]| = 0,
    \end{align}
    where $\pub_j'$ is the same as $\pub_j$, except it excludes the hash value corresponding to $v_i$.

    Next, since $H_\infty(\BioDist_i) \geq \secparam$, we have that
    \begin{align}\label{eq:hyb2}
    \Pr[\Sim^{P_{V_i}}(\pp, \pub_1,&\dots,\pub_{j-1}, \pub_j', \pub_{j+1},\dots,\pub_N) = 1] \notag \\
    &  - \Pr[D((V'_i, (\pp, \pub_1,\dots,\pub_N))) = 1] = \negl.
    \end{align}
    Indeed, $\Sim$ will never be able to guess $V_i$ and thus it as if all other information is independent of $V_i$, which is the case in the second term of Equation~\ref{eq:hyb2}.
    \alex[inline]{Could formally use FEAP loss here}

    Adding up equations~\ref{eq:hyb1} and~\ref{eq:hyb2}, we get negligible total computational distance.
\end{proof}






\iftoggle{submission}{\section*{Data Availability}
The datasets used for this study are available from the corresponding author on request.

\section*{Code Availability}
The code used for this study is available from the corresponding author on request.}{}

\iftoggle{submission}{\putbib\end{bibunit}}{
  \section{Extended Data}
}
\end{appendices}

\iftoggle{submission}{\section*{Contributions}
\alex[inline]{Check}
A.B.\@ wrote most of the paper, performed the experiments, and designed the system.
D.E.\@ assisted with the literature search.
A.P.\@ helped write the paper and managed the project.
A.S.\@ introduced the project to the team and provided feedback.

\section*{Competing Interests}
A.B., D.E., A.P. and A.S.\@ are co-inventors on a patent application related to this work, filed by JPMorganChase.

\section*{Correspondence and Requests for Materials}
Should be sent to Alexander Bienstock or Antigoni Polychroniadou.}{}

\iftoggle{submission}{\newpage}{}
\renewcommand{\figurename}{Extended Data Figure}
\setcounter{figure}{0}
\renewcommand{\thefigure}{\arabic{figure}}
\begin{figure}[t]
    \centering
    \includegraphics[width=0.85\linewidth]{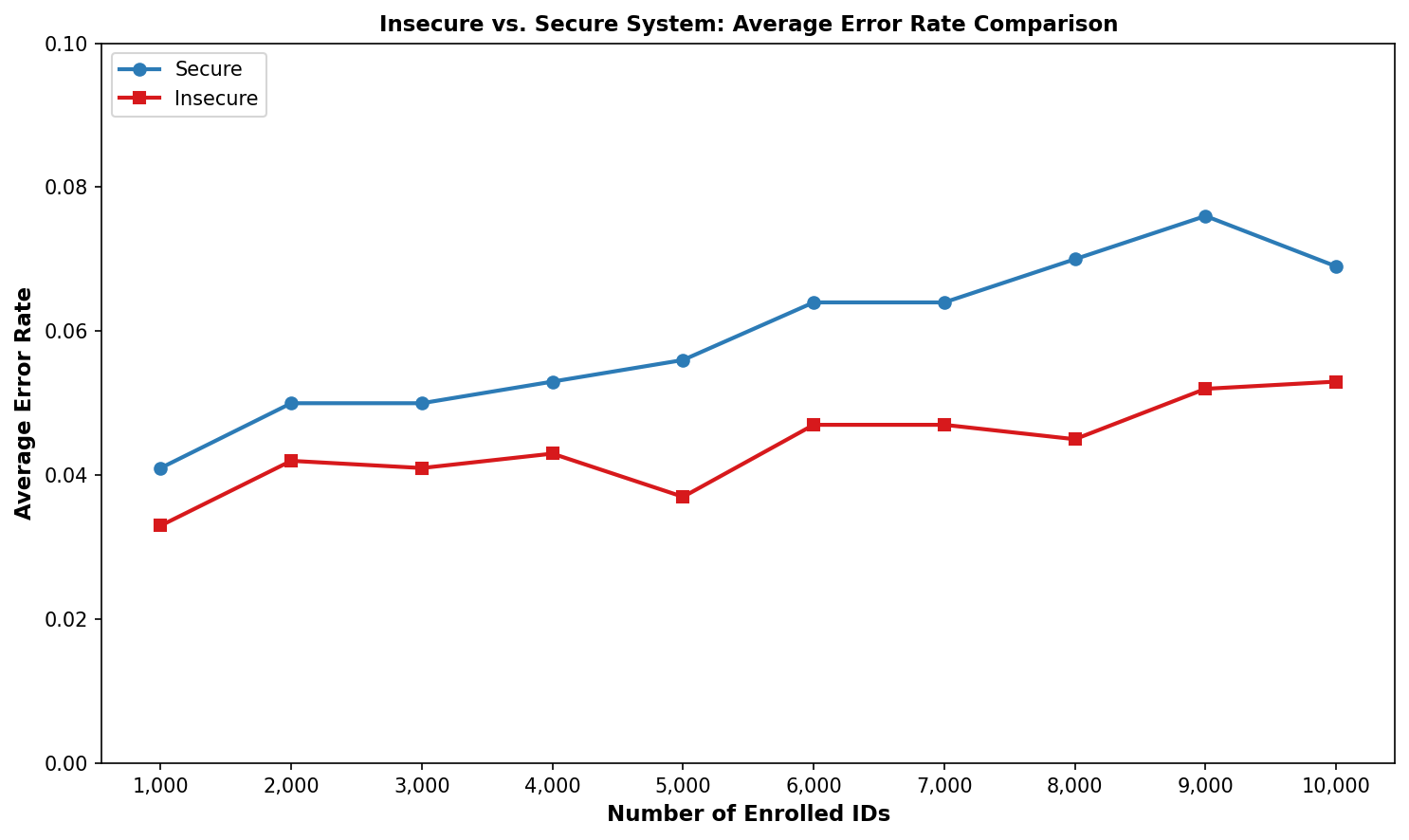}
    \caption{Error rates for databases of size 1,000-10,000 for both the secure and insecure baseline systems. False negatives are calculated over first 1000 enrolled identities. False positives are calculated over 1000 non-enrolled identities. We average false negative and false positive rates for clarity. Runs of our secure system are averaged over 5 trials.}
    \label{fig:rand-error-rates}
\end{figure}

\renewcommand{\tablename}{Extended Data Table}
\setcounter{table}{0}
\renewcommand{\thetable}{\arabic{table}}
\begin{table}[h]
\centering
\begin{tabular}{cccc}
\hline
\textbf{Substring Length (bits)} & \textbf{Min} & \textbf{Max} & \textbf{Mean} \\
\hline
64  & 47.28 & 56.71 & 52.35 \\
96  & 65.54 & 79.30 & 72.87 \\
110 & 74.3 & 87.70 & 80.93 \\
\hline
\end{tabular}
\caption{Statistics on min-entropy estimates across substring lengths.}
\label{tab:entropy}
\end{table}


\end{document}